\newtheorem{theorem}{Theorem}[section]
\newtheorem{lemma}{Lemma}[section]
\newtheorem{corollary}{Corollary}[section]
\newcommand{\qed}{\hfill $\Box$ \bigbreak}
\newenvironment{proof}{\noindent {\bf Proof.}}{\qed}
\newcommand{\remove}[1]{}
\begin{document}

\baselineskip  0.2in 
\parskip     0.05in 
\parindent   0.0in 

\title{{\bf Deterministic rendezvous with detection using beeps}
\footnote{A preliminary version of this paper appeared in
Proc. 11th International Symposium on Algorithms and Experiments for Wireless Sensor Networks (ALGOSENSORS 2015), LNCS 9536, 85-97. }}

\author{
Samir Elouasbi\thanks{D\'{e}partement d'informatique, Universit\'{e} du Qu\'{e}bec en Outaouais,
Gatineau, Qu\'{e}bec J8X 3X7,
Canada. E-mail: elos02@uqo.ca.}
\and
Andrzej Pelc\thanks{D\'{e}partement d'informatique, Universit\'{e} du Qu\'{e}bec en Outaouais,
Gatineau, Qu\'{e}bec J8X 3X7,
Canada. E-mail: pelc@uqo.ca.
Supported in part by NSERC discovery grant 8136 -- 2013 
and by the Research Chair in Distributed Computing of
the Universit\'{e} du Qu\'{e}bec en Outaouais.}
}

\date{ }
\maketitle

\begin{abstract}

Two mobile agents, 
starting at arbitrary, possibly different times from arbitrary nodes of an unknown network, have to meet at some node.
Agents move in synchronous rounds: in each round an agent can either stay at the current node or move to one of its neighbors.
Agents have different labels which are positive integers. Each agent knows its own label, but not the label of the other agent.
In traditional formulations of the rendezvous problem, meeting is accomplished when the agents get to the same node in the same round.
We want to achieve a more demanding goal, called {\em rendezvous with detection}: agents must become aware that the meeting is accomplished,
simultaneously declare this and stop. This awareness depends on how an agent can communicate to the other agent its presence at a node.
We use two variations of the arguably weakest model of communication, called the {\em beeping model}, introduced in \cite{CK}. In each round an agent can either listen or beep. In the {\em local beeping model},
an agent hears a beep in a round if it listens in this round and if the other agent is at the same node and beeps. In the {\em global beeping model},
an agent hears a {\em loud} beep in a round if it listens in this round and if the other agent is at the same node and beeps, and it hears a {\em soft}
beep in a round if it listens in this round and if the other agent is at some other node and beeps.

We first present a deterministic algorithm of rendezvous with detection working, even for the local beeping model, in an arbitrary unknown network in time polynomial in the size of the network
and in the length of the smaller label (i.e., in the logarithm of this label). However, in this algorithm, agents spend a lot of energy: the number of moves
that an agent must make, is proportional to the time of rendezvous. It is thus natural to ask if {\em bounded-energy agents}, i.e., agents that can make at most 
$c$ moves, for some integer $c$, can always achieve rendezvous with detection as well. This is impossible for some networks of unbounded size. Hence we rephrase the question: Can bounded-energy agents always achieve rendezvous with detection in bounded-size networks?
We prove that the answer to this question is positive, even in the local beeping model but, perhaps surprisingly, this ability comes at a steep price of time: the meeting time of bounded-energy agents is {\em exponentially} larger
than that of unrestricted agents. By contrast, we show an algorithm for rendezvous with detection in the global beeping model that works for bounded-energy agents
(in bounded-size networks) as fast as for unrestricted agents.

\vspace{2ex}

\noindent {\bf Keywords:} algorithms, rendezvous, detection, synchronous, deterministic, network, graph,beep. 
\end{abstract}

\vspace{2ex}

\vfill

\thispagestyle{empty}
\setcounter{page}{0}
\pagebreak

\section{Introduction}
{\bf The background and the problem.}
Two mobile agents, 
starting at arbitrary, possibly different times from arbitrary nodes of an unknown network, have to meet at some node of it.
This task is known as rendezvous ~\cite{alpern02b}.
The network is modeled as a simple undirected connected graph,  and agents move in synchronous rounds: 
in each round an agent can either stay at the current node or move to one of its neighbors. 
Hence in each round an agent is at a specific node.
Agents are mobile entities with unlimited memory; from the computational point of view they are modeled as 
Turing machines. In applications, these entities may represent mobile robots navigating in a labyrinth or in corridors of a building, or software agents moving in a communication network.
The purpose of meeting might be to exchange data previously collected by the agents at nodes of the network,
or to coordinate future network maintenance tasks, for example checking functionality of websites or of sensors connected in a network. 

Agents have different labels which are positive integers. Each agent knows its own label, but not the label of the other agent.
Agents do not know the topology of the network, they do not have any bound on its size. They do not know the starting node or activation time of the other agent. They cannot mark the visited nodes in any way.
Each agent appears at its starting node at the time of its activation by the adversary. 

We seek rendezvous algorithms that do not
rely on the knowledge of node labels, and can work in anonymous networks as well  (cf. \cite{alpern02b}). 
The importance of designing such algorithms
is motivated by the fact that, even when nodes are equipped with distinct labels, agents may be unable to perceive them
because of limited sensory capabilities, 
or nodes may refuse to reveal their labels to agents, e.g., due to security or privacy reasons.
On the other hand, we assume that
edges incident to a node $v$ have distinct labels in 
$\{0,\dots,d-1\}$, where $d$ is the degree of $v$. Thus every undirected
edge $\{u,v\}$ has two labels, which are called its {\em port numbers} at $u$
and at $v$. Port numbering is {\em local}, i.e., there is no relation between
port numbers at $u$ and at $v$. An agent entering a node learns the port of entry and the degree of the node. 
Note that, in the absence of port numbers, rendezvous is usually impossible, 
as all ports at a node look identical to an agent and the adversary may prevent the agent from taking some edge incident to the current node.

In traditional formulations of the rendezvous problem, meeting is accomplished when the agents get to the same node in the same round.
We want to achieve a more demanding goal, called {\em rendezvous with detection}: agents must become aware that the meeting is accomplished,
simultaneously declare this and stop. This awareness depends on how an agent can communicate to the other agent its presence at a node.
We use two variations of the {\em beeping model} of communication. In each round an agent can either listen, i.e., stay silent,
or beep, i.e., emit a signal. In the {\em local beeping model},
an agent hears a beep in a round if it listens in this round and if the other agent is at the same node and beeps. In the {\em global beeping model},
an agent hears a {\em loud} beep in a round if it listens in this round and if the other agent is at the same node and beeps, and it hears a {\em soft}
beep in a round if it listens in this round and if the other agent is at some other node and beeps.

 The beeping model has been introduced
in \cite{CK} for vertex coloring, and used
in \cite{AABCHK} to solve the MIS problem, and in  \cite{YJYLC} to construct a minimum connected dominating set.
In \cite{GN}, the authors studied the quantity of computational resources needed to solve problems in complete networks using beeps. In the variant from  the above papers the beeping entities were nodes rather than agents, and beeps of a node were
heard at adjacent nodes. The beeping model is widely applicable, as it makes small demands on communicating devices, relying only on carrier sensing.
In the case of mobile agents, both the local and the global beeping models are applicable in different settings. The local model is applicable even for agents
having very weak transmissions capabilities, limiting reception of a beep to the same node. The global model is applicable for more powerful agents,
that can beep sufficiently strongly to be heard in the entire network, and having a listening capability of differentiating a beep emitted at the same node from a beep emitted at a different node.   

It should be noted that our local beeping model  is arguably the weakest way of communication between agents: they can communicate
only when residing simultaneously at the same node, they cannot hear when they beep, and messages are the simplest possible. In fact, as mentioned in 
\cite{CK}, local beeps are an even weaker way of communicating than using one-bit messages, as the latter ones allow three different states (0,1 and no message),
while local beeps permit to differentiate only between a signal and its absence. 
Clearly, without any communication, rendezvous with detection is impossible, as agents cannot become aware
of each other's presence at a node. Notice also that in the global beeping model it would not be possible to remove the distinction between hearing a 
loud beep when the beeping agent is at the same node and hearing a soft beep when the beeping agent is at a different node. Indeed, the same strength of 
beep reception would make it impossible for an agent $A_1$ to inform the other agent $A_2$ of the presence of $A_1$ at the same node, and hence rendezvous with detection would be impossible. The global beeping model is at least as strong as the local one, in the sense that any algorithm of rendezvous with detection working in the local model works also in the global model, by simply ignoring soft beeps. We will see that the converse is not true.   

For a given network,
the execution time of an algorithm of rendezvous with detection, for agents with given labels starting in given rounds from given initial positions,  is the number of rounds from the activation of the later agent to the declaration of rendezvous. For a given class of networks, the {\em time} of an algorithm of rendezvous with detection is its worst-case execution time, over all networks in the class, all initial positions, all pairs of distinct labels and all starting times.

{\bf Our results.}
Our first result  answers the basic question: Is it possible to achieve rendezvous with detection in arbitrary networks, and if so, how fast it can be done?
We present a deterministic algorithm of rendezvous with detection working, even for the local beeping model, in an arbitrary unknown network in time polynomial in the size of the network
and in the length of the smaller label (i.e., in the logarithm of this label). In fact, the time complexity of our algorithm matches that of the fastest, known to date, rendezvous
algorithm without detection, constructed in \cite{TSZ07}.

However, in this algorithm, agents spend a lot of energy: the number of moves
that an agent must make, is proportional to the time of rendezvous. 
On the other hand, in many applications, e.g., when agents are mobile robots, they are battery-powered devices, and hence the energy that an agent can spend on moves is limited. 
It is thus natural to ask if {\em bounded-energy agents}, i.e., agents that can make at most 
$c$ moves, for some integer $c$, can always achieve rendezvous with detection as well. This is impossible for some networks of unbounded size. Hence we rephrase the question: Can bounded-energy agents always achieve rendezvous with detection in bounded-size networks? We prove that the answer to this question is positive, even in the local beeping model but, perhaps surprisingly, this ability comes at a steep price of time: the meeting time of bounded-energy agents is {\em exponentially} larger
than that of unrestricted agents.
By contrast, we show an algorithm for rendezvous with detection in the global beeping model that works for bounded-energy agents
(in bounded-size networks) as fast as for unrestricted agents.

{\bf Related work.}
The vast literature on rendezvous can be divided according to the mode in which agents move (deterministic or randomized) and
the environment where they move (a network modeled as a graph or a terrain in the plane).
An extensive survey of  randomized rendezvous in various scenarios  can be found in
\cite{alpern02b}, cf. also  \cite{anderson98b,KKPM08}. 
Rendezvous of two or more agents in the plane has been considered e.g.,  in \cite{fpsw,FSVY}. 

Our paper is concerned with deterministic rendezvous in networks, surveyed in \cite{P}.
In this setting a lot of effort has been dedicated to the study of the feasibility of rendezvous, and to the time required to achieve this task, when feasible. For instance, deterministic rendezvous with agents equipped with tokens used to mark nodes was considered, e.g., in~\cite{KKSS}. Time of deterministic rendezvous of agents equipped with unique labels was discussed in \cite{DFKP,TSZ07}. Memory required by the agents to achieve deterministic rendezvous has been studied in \cite{BIOKM,FP2} for trees and in  \cite{CKP} for general graphs. In \cite{MP} the authors studied tradeoffs between the time of rendezvous and the total number
of edge traversals by both agents until the meeting. 

Apart from the synchronous model used in this paper, several authors have investigated asynchronous rendezvous in the plane \cite{CFPS,fpsw,FSVY} and in network environments
\cite{BCGIL,DPV}.
In the latter scenario the agent chooses the edge which it decides to traverse but the adversary controls the speed of the agent. Under this assumption rendezvous
in a node cannot be guaranteed even in very simple graphs, and hence the rendezvous requirement is relaxed to permit the agents to meet inside an edge.

\section{Preliminaries}
In the rest of the paper the word ``graph'' means a simple connected undirected graph
modeling a network. The {\em size} of a graph is the number of its nodes.
In this section we recall two procedures known from the literature, that will be used as building blocks in our algorithms. 
The aim of the first procedure is graph exploration, i.e., visiting all nodes of a graph by a single agent. 
The procedure, called $EXP(m)$, is based on universal exploration sequences (UXS) \cite{Ko}, and follows from the  result of Reingold \cite{Re}. Given any positive integer $m$, it allows the agent to visit all nodes of any graph of size at most $m$,
starting from any node of this graph, using $R(m)$ edge traversals, where $R$ is some polynomial. 

A UXS is an infinite sequence $x_1,x_2,\dots$ of non-negative integers. Given this sequence, whose effective construction follows from  \cite{Ko,Re}, the procedure $EXP(m)$ can be described as follows.
In step 1, the agent leaves the starting node by port 0. For $i \geq 1$, the agent that entered the current node 
of degree $d$ by some port $p$ in step $i$,
computes the port $q$ by which it has to exit in step $i+1$ as follows: $q=(p+x_i)\mod d$.
The result of Reingold implies that if an agent starts at any node $v$ of an arbitrary graph with at most $m$ nodes,  and applies procedure $EXP(m)$, then it will visit all nodes of the graph after $R(m)$ steps.

The second procedure, due to Ta-Shma and Zwick
\cite{TSZ07},  guarantees rendezvous (without detection) in an arbitrary graph. 
Below we briefly sketch this procedure, which will
be used in our algorithm of rendezvous with detection for unrestricted agents.

Let $\mathbb{Z}^+$ denote the set of positive integers and let $\mathbb{Z}^*$ denote the set of integers greater or equal than $-1$.
For any positive integer $L$, Ta-Shma and Zwick define a function $\Phi_L: \mathbb{Z}^+ \times \mathbb{Z}^+ \times \mathbb{Z}^* \longrightarrow \mathbb{Z}^*$.
Intuitively, this function describes a walk in the graph: when an agent starts at some node $v$ of the graph, the function $\Phi$ indicates which port the agent should take in the $t$-th step of its walk, or that it should stay idle (i.e., do not move) in the $t$th step. The three parameters of the function are: the next step number $t$, the degree $d$ of the current node, and the port number $p$ by which the agent entered the current node in the previous step, or $-1$ if it stayed idle in the previous step. The value of the function is either the port number by which the agent should leave the current node in the next step, or $-1$ if it should stay in the next step.

More formally, the function $\Phi_L$ is {\em applied} by an agent with label $L$ in a graph $G$ at a node $v$ of $G$ as follows. Let $v_0=v$ and let $v_1$ be the node adjacent to $v_0$, such that the edge $\{v_0,v_1\}$ has port number 0 at $v_0$. Suppose that nodes $v_0,v_1,\dots, v_{t-1}$ are already constructed, so that $v_{i+1}$ either
equals $v_i$ or is adjacent to $v_i$. The node $v_t$ is defined as follows. In the case when $v_{t-1}=v_{t-2}$ and the degree of $v_{t-1}$ is $d$, then $v_t=v_{t-1}$ if
$\Phi_L(t,d,-1)=-1$; if $\Phi_L(t,d,-1)=q\geq 0$ then $v_t$ is the node adjacent to $v_{t-1}$ such that the port number at $v_{t-1}$ corresponding to edge $\{v_{t-1},v_t\}$ is $q$.
In the case when $v_{t-1}\neq v_{t-2}$, the port number at $v_{t-1}$ corresponding to edge $\{v_{t-1},v_{t-2}\}$ is $p$  and the degree of $v_{t-1}$ is $d$, 
then $v_t=v_{t-1}$ if
$\Phi_L(t,d,p)=-1$; if $\Phi_L(t,d,p)=q>0$ then $v_t$ is the node adjacent to $v_{t-1}$ such that the port number at $v_{t-1}$ corresponding to edge $\{v_{t-1},v_t\}$ is $q$.
Hence the application of function $\Phi_L$ at node $v$ defines an infinite walk of the agent with label $L$ in the graph $G$. This walk starts at $v$ and in each round $t$
the agent either stays at the current node or moves to an adjacent node by a port determined by the function $\Phi_L$ on the basis of the degree of the current node
and of the port by which the agent entered it. A round $t$ is called {\em active} for the agent if $v_t \neq v_{t-1}$ and it is called {\em passive} if $v_t=v_{t-1}$. 

The following result, proved in  \cite{TSZ07},  guarantees rendezvous without detection in polynomial time, if two agents apply functions $\Phi_L$ corresponding to their labels, in an unknown graph.

\begin{theorem}\label{tz}
There exists a polynomial $P$ in two variables, with the following property.
Let $G$ be an $n$-node graph and consider two agents with distinct labels $L_1$, $L_2$ respectively, 
starting at nodes $v$ and $w$ of the graph in rounds $t_1 \geq t_2$. Let $t\geq t_1$ and
let $\ell$ be  the smaller label. If agent with label $L_i$ applies function $\Phi_{L_i}$ at its starting node, for $i=1,2$, then
agents are simultaneously at the same node in some round of the time interval $[t,t+P(n,\log \ell)]$. Moreover, rendezvous occurs in a round which is active for one of the agents and passive for the other. The same property remains true if one of the agents stays idle and the other agent applies its function~$\Phi_{L_i}$.
\end{theorem}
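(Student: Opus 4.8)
The plan is to design $\Phi_L$ so that the walk it induces decomposes into two kinds of \emph{phases}: \emph{exploration phases}, during which the agent follows a segment of a \emph{strongly} universal exploration sequence, and \emph{idle phases}, during which it stays put. The sequence of phase types is a fixed function of the binary representation of $L$, and the lengths are chosen long enough that an exploration phase, once it overlaps an entire idle phase of the other agent, visits every node of $G$. The whole theorem then reduces to a single combinatorial guarantee: for any two distinct labels $L_1,L_2$ and any relative offset of their walks (forced on us because the agents start in possibly different rounds $t_1\geq t_2$), there is a round, within a window of length polynomial in $n$ and $\log\ell$ starting at $t$, in which one agent is in the interior of an exploration phase while the other is inside an idle phase.

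Granting this guarantee, the meeting and the active/passive property follow at once. Suppose we are in a window where the agent with label $L_1$ is exploring while the agent with label $L_2$ sits at a node $u$. Two obstacles appear, both handled by \emph{strong} universality. First, the agents do not know $n$: I would not feed any size bound into $\Phi_L$, but rather use that every contiguous segment of length $R(n)$ of a strongly universal exploration sequence is already universal for the size-$n$ graph $G$, so the agent need not know $n$ while the analysis may use it. Second, because of the offset, $L_1$ may have entered its exploration phase at an arbitrary interior point; but again any length-$R(n)$ window of the sequence is universal, so during the rounds in which $L_2$ is idle at $u$ the agent $L_1$ performs a complete exploration and steps onto $u$ in some round. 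That round is active for $L_1$ and passive for $L_2$, exactly the claimed form of the meeting, and it lies in $[t,\,t+P(n,\log\ell)]$ since the window is reached within a polynomial number of rounds of $t$ and the exploration inside it adds only $R(n)$ further rounds.

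The combinatorial guarantee is where I expect the real work to lie. The difficulty is twofold: the two phase-patterns are misaligned by an unknown offset, and they are generated from labels of possibly very different lengths. I would encode each label into a phase-pattern in a \emph{shift-resistant} way --- for instance by delimiting and padding the binary string of $L$ so that no pattern is a shift of another, and so that the patterns of two distinct labels are forced to disagree (one calling for ``explore'', the other for ``idle'') at some position within a bounded window, \emph{for every} alignment. Each symbol of the encoding expands to a phase, so an encoding of $O(\log L)$ symbols yields a walk segment of length polynomial in $n$ and $\log L$; a careful accounting, arranging that the agent with the smaller label drives the meeting, gives the stated dependence on $\log\ell$ rather than on the larger label. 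Proving that the disagreement is forced within a polynomial window under \emph{all} offsets is the technical heart, and the part I would expect to require the most care.

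Finally, the degenerate case in which one agent stays idle throughout while the other applies its $\Phi_{L_i}$ is immediate from the same analysis: the idle agent is permanently ``in an idle phase'' at a fixed node $u$, so it suffices to catch the moving agent in any one of its exploration phases. By strong universality this phase sweeps $u$ within $R(n)$ rounds, and since such a phase begins within $P(n,\log\ell)$ rounds of $t$, the meeting occurs in the required time window, again in a round active for the mover and passive for the idle agent.
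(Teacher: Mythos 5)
The first thing to note is that the paper does not prove this statement at all: Theorem~\ref{tz} is imported verbatim from Ta-Shma and Zwick \cite{TSZ07}, so there is no in-paper proof to compare against. Your sketch does correctly reconstruct the architecture of the actual argument in \cite{TSZ07} --- walks that alternate exploration phases and idle phases according to a label-derived pattern, with strong universality absorbing both the agents' ignorance of $n$ and the arbitrary entry point into an exploration phase, and with the meeting landing in a round that is active for the explorer and passive for the idle agent. The degenerate case of one permanently idle agent is also handled correctly.

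However, as a proof the proposal has two genuine gaps, and they are exactly the two places where the theorem is hard. First, you invoke strongly universal exploration sequences --- sequences every contiguous length-$R(n)$ window of which explores any $n$-node graph --- as if their existence with polynomial $R$ were available off the shelf. It is not: constructing polynomially strong universal exploration sequences is one of the main technical contributions of \cite{TSZ07} (built on top of Reingold's log-space connectivity result), and without it the entire phase-based strategy collapses, since an agent entering an exploration phase at an arbitrary offset has no guarantee of covering the graph in polynomial time. Second, and more seriously, you name the combinatorial core --- that for every pair of distinct labels and \emph{every} temporal offset, an explore/idle disagreement occurs within a window polynomial in $n$ and $\log\ell$ where $\ell$ is the \emph{smaller} label --- and then explicitly defer it (``the part I would expect to require the most care''). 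Getting the bound to depend on the smaller label rather than the larger one is precisely what distinguishes this theorem from earlier, weaker rendezvous results; a naive delimit-and-pad encoding gives a window governed by the longer label's pattern, and the offset analysis for labels of very different lengths is where the real work lives. A proof that states the key lemma, asserts it is provable with care, and does not prove it, is an outline rather than a proof.
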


\section{Rendezvous with detection of unrestricted agents}

In this section we describe and analyze an algorithm of rendezvous with detection which works for unrestricted agents, i.e., for agents that can spend an arbitrary amount
of energy on moves. It works even for the weaker of our two models, i.e., for the local beeping model.
Our algorithm uses the following procedure which describes an infinite walk of an agent with label $L$, 
based on the above described application of the function $\Phi_L$. 

{\bf Procedure} {\tt Beeping walk}

Consider an agent with label $L$ starting at node $v$ of a graph $G$. Let $W$ be the walk resulting from the application of  $\Phi_L$ in graph $G$ at node $v$.
Each round of $W$ is replaced by 2 consecutive rounds as follows. If round $t$ of $W$ is passive, i.e., $v_t=v_{t-1}$, then this round is replaced by two rounds in which
the agent stays at $v_t$ and listens.  If round $t$ of $W$ is active, i.e., $v_t\neq v_{t-1}$, then this round is replaced by the following two rounds: in the first of these
rounds the agent goes to $v_t$ and beeps, and in the second of these rounds the agent stays at $v_t$ and listens.

We now describe our algorithm for rendezvous with detection. It is executed by each agent. Note that the execution of procedure  {\tt Beeping walk}, called by the algorithm,
depends on the label of the agent.

\begin{center}
\fbox{
\begin{minipage}{12cm}
{\bf Algorithm} {\tt RV-with-detection}

Perform procedure {\tt Beeping walk} {\bf until} you hear a beep\\
Let $s$ be the current round number 
(counted since your wake-up)\\
Stay idle forever\\
Beep in round $s+1$
listen in round $s+2$\\
{\bf If} you hear no beep in round $s+2$ {\bf then}\\ 
\hspace*{1cm}declare rendezvous in round $s+3$ and stop\\
{\bf else}\\
\hspace*{1cm}listen in round $s+3$, declare rendezvous in round $s+4$, and stop.   
\end{minipage}
}
\end{center}

We now show that Algorithm {\tt RV-with-detection} correctly accomplishes rendezvous with detection and works in time polynomial in the size of the graph and in the
logarithm of the smaller label. The agent that starts later will be called the {\em later} agent and the other one the {\em earlier} agent.
If agents start simultaneously, these qualifiers are attributed arbitrarily.

\begin{theorem}\label{ub1}
Consider two agents with distinct labels $L_1$, $L_2$ respectively, 
starting at nodes $v$ and $w$ of an $n$-node graph in possibly different rounds.
Let $\ell$ be  the smaller label.
If both agents execute Algorithm {\tt RV-with-detection}, then they meet and simultaneously declare rendezvous in time $O(P(n,\log \ell))$, i.e., polynomial in $n$ and in $\log \ell$, after
the start of the later agent.
\end{theorem}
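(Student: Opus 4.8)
The plan is to establish three things in turn: that the two beeping walks eventually bring the agents to a common node in a configuration that is \emph{active} for one and \emph{passive} for the other, that such a configuration forces one agent to actually hear a beep, and that the fixed suffix of the algorithm executed after the first heard beep makes both agents declare in the very same round.

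First I would fix notation for the doubled walk. For an agent with label $L$, its $\Phi_L$-walk visits nodes $v_0,v_1,v_2,\dots$, and in procedure {\tt Beeping walk} the agent occupies $v_k$ during exactly the two consecutive rounds $2k-1,2k$ of its own clock, beeping in the first of them iff step $k$ is active and listening in every other round. Writing $G$ for the global activation round of the agent, step $k$ therefore occupies the global rounds $G+2k-2$ and $G+2k-1$. To obtain a meeting I would invoke Theorem~\ref{tz}. The beeping walk is simply the $\Phi_L$-walk slowed down by a factor two, so sampling the global time line at the rounds of one fixed parity turns the pair of beeping walks into a pair of ordinary $\Phi_{L_i}$-walks run at full speed with a \emph{constant} index offset (essentially half the difference of the two activation rounds). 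Theorem~\ref{tz}, applied with $t$ equal to the activation round of the later agent, then yields a round $\rho_0$, within $O(P(n,\log\ell))$ rounds of that activation, at which both agents sit at a common node $M$, with $\rho_0$ active for one agent and passive for the other. The doubling only multiplies the bound by two, and since the additive offset does not affect the bound in Theorem~\ref{tz}, the meeting still occurs in time $O(P(n,\log\ell))$.

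The heart of the argument is the claim that at such a round the beep is heard. Suppose $\rho_0$ is active for agent $A$ and passive for agent $B$, and let $k_A,k_B$ be the respective step numbers, so that $v^A_{k_A}=v^B_{k_B}=M$, with $A$ entering $M$ at step $k_A$ (hence beeping at the single global round $\alpha:=G_A+2k_A-2$) while $B$ is already at $M$ at step $k_B-1$. Then $B$ occupies $M$ throughout the four rounds of steps $k_B-1$ and $k_B$ and listens during the last three of them, i.e.\ throughout $[\,G_B+2k_B-3,\;G_B+2k_B-1\,]$. Since $\rho_0$ lies simultaneously in $A$'s step-$k_A$ interval $[\alpha,\alpha+1]$ and in $B$'s step-$k_B$ interval $[\,G_B+2k_B-2,\;G_B+2k_B-1\,]$, the inequalities $\rho_0\le\alpha+1$ and $\rho_0\ge G_B+2k_B-2$ give $\alpha\ge G_B+2k_B-3$, while $\rho_0\ge\alpha$ and $\rho_0\le G_B+2k_B-1$ give $\alpha\le G_B+2k_B-1$; hence $\alpha$ falls inside $B$'s listening window and $B$ hears $A$'s beep. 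The point I want to stress is that this is insensitive to the parity of the difference of activation times: because each node of the slowed walk is held for two rounds while the passive agent listens for at least three consecutive rounds, the single beeping round of the arriving agent is always swallowed by the listening window. I expect this reconciliation of the factor-two stretch with an arbitrary (possibly odd) offset between the activation rounds to be the main obstacle.

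Finally I would analyze the handshake to obtain \emph{simultaneous} declaration. Let $\alpha$ be the first global round in which any agent hears a beep; by the above it exists and is $O(P(n,\log\ell))$. At $\alpha$ exactly one agent hears, since the beeping agent is not listening; say $B$ hears $A$, both are at $M$, and neither has yet left the walk. As $A$ is then in the first (active) round of its step, in the next round it is still at $M$ and listening, so at round $\alpha+1$ the agent $B$, already in the handshake, beeps and $A$ hears it and also leaves the walk. Tracking the fixed suffix of Algorithm {\tt RV-with-detection} from each trigger round then shows that $B$ (triggered at $\alpha$) hears $A$'s beep at $\alpha+2$ and so takes the \emph{else} branch, declaring at $\alpha+4$, whereas $A$ (triggered at $\alpha+1$) hears silence at $\alpha+3$ and so takes the \emph{if} branch, also declaring at $\alpha+4$. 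The asymmetric branches are exactly what absorbs the one-round gap between the two trigger rounds, so both agents declare rendezvous in round $\alpha+4$ and stop, giving correctness together with the claimed $O(P(n,\log\ell))$ running time.
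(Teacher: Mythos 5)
Your proposal is correct and follows essentially the same route as the paper: reduce the two half-speed beeping walks to an application of Theorem~\ref{tz}, show that the guaranteed active/passive meeting forces one agent to hear a beep, and then track the four-round handshake to obtain simultaneous declaration in round $\alpha+4$. The only difference is presentational: where the paper splits into cases according to whether the two agents' two-round segments are aligned (and, if not, which of the overlapping segments is active), you replace that case analysis by the uniform observation that the arriving agent's single beeping round must fall inside the passive agent's three-round listening window --- a slightly cleaner way to dispose of the parity issue, and one that silently but harmlessly relies on the fact that step $1$ of a $\Phi_L$-walk is always active, so a passive step $k_B$ has a predecessor.
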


\begin{proof}
An agent executing Algorithm {\tt RV-with-detection} starts by performing procedure {\tt Beeping walk}. We first prove that when both agents execute this procedure
there must be a round in which one of them hears the beep of the other. By Theorem \ref{tz},  when agent with label $L_i$ applies function $\Phi_{L_i}$ at its starting node, for $i=1,2$, there is a round $r$ when they meet, and this round is active for one agent and passive for the other. 
(The names of the rounds are for the ease of description only, as none of the agents knows them.) 
This holds regardless of the starting rounds and starting positions  of the agents.
 By definition, 
procedure {\tt Beeping walk} is a simulation of the application of $\Phi_{L_i}$ with 2-round segments corresponding to rounds. A segment simulating an active (resp. passive) round will be called an {\em active (resp. passive) segment}. 

Suppose that the segment simulating the meeting round $r$ is the $\rho$th segment of the earlier agent. 
If the segments of the two agents are aligned, i.e., the first round of the later agent is the first round of a segment of the earlier agent, then in the first round of the $\rho$th
segment of the earlier agent one agent beeps and the other listens because this segment is active for one of them and passive for the other.
Since both agents are at the same node in this round, one of them hears the beep of the other. Hence we may assume
that the segments are not aligned. Suppose that the $\sigma$th segment of the later agent starts during the $\rho$th segment of the earlier agent, 
i.e., the first round of the  $\sigma$th segment of the later agent is the second round of the $\rho$th segment of the earlier agent. 
By Theorem \ref{tz} there are two cases: either the $\rho$th segment of the earlier agent is passive and the $\sigma$th segment of the later agent is active, or
the $\rho$th segment of the earlier agent is active and the $\sigma$th segment of the later agent is passive. In the first case let $r'$ be the first round of the $\sigma$th
segment of the later agent. In round $r'$ both agents are at the same node and the later agent beeps while the earlier agent listens, hence it hears the beep. Consider the
second case. Regardless of whether the $(\sigma-1)$th segment of the later agent is active or passive, during this segment the later agent is at the same node as the earlier
agent during its $\rho$th segment. Let $r''$ be the first round of the $\rho$th segment of the earlier agent.  Since segments of agents are not aligned, in round $r''$, when the earlier agent beeps, the later agent listens, and hence hears the beep.
This shows that in all cases there must be a round in which one of the agents hears the beep of the other.

Let $t$ be the first round in the execution of Algorithm {\tt RV-with-detection} by both agents, in which one agent hears the beep of the other. 
Denote by $A_1$ the agent that listens in round $t$ and by $A_2$ the other agent. By the algorithm, agent $A_1$ stays idle from round $t$ on, and beeps 
in round $t+1$. In this round agent $A_2$  stays idle at the same node and listens, as it still executes procedure {\tt Beeping walk}, and this is the second round
of a segment for this agent. Hence agent $A_2$ hears a beep in round $t+1$. This is the first beep that it hears. Consequently, agent $A_2$ stays 
idle from round $t+1$ on, and beeps in round $t+2$. Hence agent $A_1$ hears a beep in round $t+2$. By the algorithm, it listens in round $t+3$, declares rendezvous in
round $t+4$ and stops. As for agent $A_2$, it listens in round $t+3$ and does not hear a beep in this round. Hence it also declares rendezvous in round $t+4$ and stops.

It follows that in round  $t+4$ both agents are at the same node, declare rendezvous and stop. This proves the correctness of
Algorithm {\tt RV-with-detection}. It remains to estimate its time. Round $t$ must occur at most $2P(n, \log \ell)$ rounds after the start
of the later agent. 
Since simultaneous declaration of rendezvous is in round $t+4$,
this proves the theorem. 
\end{proof}

\section{Rendezvous with detection of bounded-energy agents}

In this section we study rendezvous with detection of agents that can perform a bounded number of moves. Let $c$ be a positive integer.
A $c$-{\em bounded agent} is defined as an agent that can perform at most $c$ moves. (Notice that we do not restrict the number of beeps; indeed, the amount
of energy required to make a move is usually so much larger than the amount of energy required to beep that ignoring the latter seems to be a reasonable approximation of reality in many applications.) 
Can $c$-bounded agents, for some integer $c$, perform rendezvous with detection in arbitrary graphs? The answer to this question is, of course, negative, even if detection is not required. For any integer $c$,  $c$-bounded agents starting at  distance
larger than $2c$ cannot meet because at least one of them would have to make more than $c$ steps. Even if we assume that the initial distance between the agents is 1,
meeting of $c$-bounded agents is impossible in some graphs. Indeed, consider two $n$-node stars whose centers are linked by an edge, with agents starting at the centers of the stars. In the worst case, at least one of the agents must make at least $n-1$ steps before meeting (to find the connecting edge), which is impossible for $c$-bounded agents, when $n$ is large.

Thus, we rephrase the question: Can $c$-bounded agents always achieve rendezvous with detection in bounded-size graphs? More precisely, for any integer $n$,
does there exist an integer $c$, such that $c$-bounded agents can achieve rendezvous with detection in all graphs of size at most $n$? (Notice that, for example,
Algorithm {\tt RV-with-detection} cannot be used here. In this algorithm, the number of steps  performed by an agent with label $L$ is proportional to $P(n,\log L)$, and hence, even when the size of the graph is bounded, this number can be arbitrarily large.)   
The answer to our question turns out to be positive, even in the local beeping model. Below we describe an algorithm that performs this task. 

\subsection{Bounded-energy agents in the local beeping model}

Our algorithm uses the following procedure, for an integer parameter $n$.

{\bf Procedure} {\tt Beeping exploration} $(n)$

Let $EXP(n)$ be the procedure described in Section 2 that permits exploration of all graphs of size at most $n$. Replace each round $r$ of $EXP(n)$ by three consecutive rounds as follows. If in round $r$ of $EXP(n)$ the agent takes port $p$ to move to node $w$, then in the first of the three replacing rounds the agent
takes port $p$ to move to $w$ and beeps, and in the second and third of the replacing rounds it stays at $w$ and listens. 

Hence, in each of the three rounds replacing a round $r$ of $EXP(n)$, the agent is at the same node in Procedure  {\tt Beeping exploration} $(n)$ as it is in Procedure
$EXP(n)$  in round $r$.

We now describe our algorithm for rendezvous with detection of bounded-energy agents, executed by an agent with label $L$ in a graph of size at most $n$. Recall that $R(n)$ is the execution time of $EXP(n)$.
The idea of the algorithm is the following. Its main part, called {\em  block}, consists of two executions of Procedure  {\tt Beeping exploration} $(n)$ between which a long {\em waiting period} 
is inserted, during which the agent is silent (it listens) and stays idle. The length of this period depends on the label of the agent. We will prove that, regardless of the delay
between the starting times of the agents, an entire execution of Procedure  {\tt Beeping exploration} $(n)$ of one of the agents must either fall within the waiting period of the other agent,
or must be executed after both executions of this procedure by the other agent.
The block of the algorithm executed by a given agent is interrupted in one of the two cases: either when (a) the agent hears a beep during its waiting period or after completing its  block, or when (b) it hears beeps in two consecutive rounds during one of the executions of Procedure  {\tt Beeping exploration} $(n)$. In case (a) the agent responds by beeps in two consecutive rounds,
declares rendezvous in the next round and stops. In case (b) it declares rendezvous in the next round and stops.

Below we give the pseudo-code of the algorithm executed by an agent with label $L$ in a graph of size at most $n$.
During the executions of Procedure  {\tt Beeping exploration} $(n)$, a boolean variable $waiting$ is set to false, and
during the waiting period and after the second execution of Procedure  {\tt Beeping exploration} $(n)$  it is set to true. We use a boolean valued function {\tt condition}
which takes the variable $waiting$ as input, and returns, after each round, the boolean value of the expression\\
($waiting$ {\bf and} you hear a beep) {\bf or}  ($\neg waiting$ {\bf and} you hear beeps in two consecutive rounds). \\

\begin{center}
\fbox{
\begin{minipage}{16cm}

{\bf Algorithm} {\tt Bounded-energy-RV-with-detection}\\

$waiting :=$ false\\
Perform the following sequence of actions in consecutive rounds\\ and verify the value of {\tt condition} in each round\\ 
{\bf until}  {\tt condition} becomes true \\
\hspace*{1cm}Perform Procedure  {\tt Beeping exploration} $(n)$\\
\hspace*{1cm}$waiting :=$ true\\
\hspace*{1cm}Stay idle and listen for $6L \cdot R(n)$ rounds\\
\hspace*{1cm}$waiting :=$ false\\
\hspace*{1cm}Perform Procedure  {\tt Beeping exploration} $(n)$\\
\hspace*{1cm}$waiting :=$ true\\
\hspace*{1cm}Stay idle forever and listen\\
$s:=$  the round number when {\tt condition} becomes true (counted since your wake-up)\\
{\bf if} $waiting$ {\bf then}\\
\hspace*{1cm}beep in rounds $s+1$ and $s+2$\\
\hspace*{1cm}declare rendezvous in round $s+3$ and stop\\
{\bf else}\\
\hspace*{1cm}declare rendezvous in round $s+1$ and stop.
\end{minipage}
}
\end{center}

\begin{theorem}\label{ub2}
For any positive integer constant $n$ there exists a positive integer $c$, such that Algorithm {\tt Bounded-energy-RV-with-detection} can be executed by $c$-bounded agents in any graph of size at most $n$. If two such agents with distinct labels execute this algorithm in such a graph, then they meet and simultaneously declare rendezvous in time $O(\ell ^*)$
after the start of the later agent, where $\ell ^*$ is the larger label.  
\end{theorem}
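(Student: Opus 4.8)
The first assertion is the routine part. Within one execution, the only rounds in which an agent moves are the ``move‑and‑beep'' rounds of its two calls to Procedure \texttt{Beeping exploration}$(n)$; during the waiting period, the final idle phase, and the response rounds it only beeps or listens. Since each call to \texttt{Beeping exploration}$(n)$ performs exactly $R(n)$ moves (one per round of $EXP(n)$), every agent makes exactly $2R(n)$ moves, \emph{independently of its label}; hence $c=2R(n)$ works in every graph of size at most $n$. For the rest I would fix notation: write $E=3R(n)$ for the length of one call to \texttt{Beeping exploration}$(n)$, so that an agent with label $L$ runs an exploration of length $E$ (beeping on the first round of each three‑round group, listening otherwise), then a silent \emph{waiting window} of length $2LE$ in which it stays put and listens, then a second exploration of length $E$, then a silent \emph{idle phase} in which it stays put and listens forever. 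The plan is to analyse the first round, taken over both agents, in which \texttt{condition} becomes true.

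The first structural observation is that this first firing must be of the ``$waiting$ and you hear a beep'' kind. A single agent beeps in two consecutive rounds only in its response phase (rounds $s+1,s+2$), which it reaches only \emph{after} its own \texttt{condition} has fired; during an exploration it beeps only once per three‑round group, so it can never produce two consecutive beeps. Hence the ``$\neg waiting$ and two consecutive beeps'' branch can never be the first trigger, and moreover at most one agent can trigger in any round (the triggering agent listens while the other beeps). So let $t$ be the first triggering round, $Y$ the unique agent that triggers (necessarily while silent), and $X$ the other agent. At round $t$, $X$ must be beeping, and it cannot be in its response phase (else $X$ triggered before $t$), so $X$ is beeping in an exploration: round $t$ is a move‑and‑beep round bringing $X$ onto $Y$'s node, where it then listens in rounds $t+1,t+2$. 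Meanwhile $Y$ (with $waiting=$true) beeps in $t+1,t+2$ and declares in $t+3$; thus $X$ hears two consecutive beeps, its \texttt{condition} fires at $t+2$, and with $\neg waiting$ it also declares in $t+3$. Neither agent moves during $t+1,\dots,t+3$, so both are at the common node and declare simultaneously in round $t+3$. This gives correctness, \emph{provided some trigger occurs}.

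It remains to guarantee a trigger and to bound its time. The heart is the containment claim announced before the algorithm: whatever the delay $\tau$ between the wake‑ups, some exploration of one agent lies entirely inside a single silent window (waiting window or idle phase) of the other. I would prove this by cases on $\tau$, using that the larger‑label agent's waiting window has length $2\ell^{*}E\ge 2E+2\ell E$, i.e. at least the whole block length of the smaller‑label agent; the cases correspond to the later agent waking during the other's first exploration, waiting window, second exploration, or idle phase, and in each one an explicit contained exploration is exhibited. Given such a containment, the stationary listener sits at a fixed node $u$ and always listens, while the explorer runs a full $EXP(n)$ and so enters every node of the (at most $n$‑node) graph; when it enters $u$ by a move it beeps and the listener hears it, forcing a trigger by the end of that exploration. \textbf{The main obstacle is exactly this last step in the degenerate case where $u$ is the explorer's \emph{starting} node of that exploration and the walk never returns to $u$: then the explorer beeps only at nodes it moves into, never at $u$, and that particular containment delivers no beep.} I would resolve this by showing that the containment claim can always be realised by an event in which the explorer enters the listener's node \emph{from elsewhere} (intuitively the agents have separated by then), so at least one containment event genuinely produces a beep; the single‑edge configurations already illustrate that when an earlier event is lost to this coincidence, the event ``second exploration of the larger‑label agent inside the idle phase of the smaller‑label agent'' still succeeds.

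Finally, the successful containment exploration ends within $2E+2\ell^{*}E=O(\ell^{*})$ rounds of the later agent's wake‑up (the dominant term being the larger agent's waiting window, with $E=3R(n)$ constant for fixed $n$), so the first trigger occurs by then and rendezvous is declared three rounds later, yielding the claimed $O(\ell^{*})$ time measured from the start of the later agent.
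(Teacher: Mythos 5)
Your proposal follows the paper's proof almost step for step: the same bound $c=2R(n)$, the same block/containment claim (which the paper proves by exactly the three cases on the offset $p$ of the smaller block that you sketch, using $2EL_1\geq 2EL_2+2E$), the same argument that the first firing of \texttt{condition} must be of the ``$waiting$ {\bf and} you hear a beep'' kind, the same handshake ending in a simultaneous declaration three rounds later, and the same $O(\ell^*)$ accounting. Two things keep it from being a complete proof. First, the containment claim is only announced (``I would prove this by cases on $\tau$''); the cases are easy but they are the quantitative heart of the argument and should be written out, as the paper does. Second, and more seriously, the ``degenerate case'' you flag is left unresolved. You state the obstacle precisely: \texttt{Beeping exploration}$(n)$ beeps only upon \emph{entering} a node, so if the listener happens to sit at the origin of the contained exploration and the $EXP(n)$ walk never re-enters that origin, the contained exploration produces no audible beep. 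Your proposed repair --- that ``the containment claim can always be realised by an event in which the explorer enters the listener's node from elsewhere'' --- is asserted, not proved, and it is not obviously true: each containment event produced by the case analysis starts at whatever node the moving agent's previous exploration happened to end at, and nothing a priori prevents that node from coinciding with the listener's position in the backup event as well. As written, the central step of your proof (``a trigger must occur'') rests on an unproven claim.

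To be fair, the paper does not confront this issue either: it simply asserts that during the contained exploration the moving agent ``visits all nodes of the graph and beeps at each node,'' which is precisely the point you are questioning (it beeps at each node it moves \emph{into}, i.e., at every node except possibly the origin of that exploration, which need not be revisited). So you have identified a genuine imprecision, but flagging an obstacle is not the same as overcoming it, and your sketch does not close the hole. The cleanest repairs are local to the procedure rather than to the containment analysis: for instance, prepend to \texttt{Beeping exploration}$(n)$ one extra beep--listen--listen group at the origin before the first move (this preserves the invariant that an agent executing its block never beeps in two consecutive rounds), or replace $EXP(n)$ by $EXP(n)$ followed by its backtracked reversal so that the walk is closed and the origin is re-entered by a move. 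Either modification makes the statement ``the explorer beeps at every node while the other agent listens'' literally true and restores the argument; without some such fix, the proof as you have written it is incomplete at its decisive step.
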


\begin{proof}
Let $n$ be a fixed positive integer. In order to show the existence of the integer $c$, it is enough to show that the number of steps prescribed by the algorithm depends only on
the integer $n$ (and not on the label $L$ of the agent). The agent moves  only during the two executions 
of Procedure  {\tt Beeping exploration} $(n)$, at most once every three rounds. Since a full execution of 
Procedure  {\tt Beeping exploration} $(n)$ takes $E=3R(n)$ rounds, the agent moves at most $2E/3=2R(n)$ times, and hence it is enough to take $c=2R(n)$.

We now prove the correctness of the algorithm. Let $B(L)$ be the sequence of actions

\hspace*{1cm}Perform Procedure  {\tt Beeping exploration} $(n)$;\\
\hspace*{1cm}Stay idle for $6L \cdot R(n)$ rounds and listen; (waiting period)\\
\hspace*{1cm}Perform Procedure  {\tt Beeping exploration} $(n)$;\\

For any label $L$, the sequence $B(L)$ of actions will be called the {\em  block} of the agent with label $L$.
We will use the following claim.

{\bf Claim.}
Consider blocks $B(L_1)$ and $B(L_2)$, for $L_1>L_2$, arbitrarily shifted in time with respect to each other. Then one of the two following properties must hold.
Either an entire execution of Procedure  {\tt Beeping exploration} $(n)$ in one of the blocks falls within the waiting period of the other block, or
an entire execution of Procedure  {\tt Beeping exploration} $(n)$ in one of the blocks falls after the end of the other block.

In order to prove the claim, consider  blocks $B(L_1)$ and $B(L_2)$, for $L_1>L_2$. Call the block $B(L_1)$ the larger block and the block $B(L_2)$ the smaller block. The waiting period of the smaller block has length $Y=2EL_2 \geq 2E$. Since $L_1 \geq L_2+1$,  the waiting period of the larger block has length 
$2EL_1 \geq 2E(L_2+1)=2EL_2+2E= Y+2E$. Let the starting round of the block that starts earlier have number 0. 
We will use global round numbers starting from this round.  (This is for the purpose of analysis only, as agents do not have access to any global round counter.)

If the smaller block starts earlier, then the second execution of Procedure  {\tt Beeping exploration} $(n)$ of the larger block is performed after the end of the smaller block, and the claim is proved. Hence we may assume that the larger block starts earlier or simultaneously with the smaller block.
Let $p$ be the first round of the smaller block. Consider three cases.

Case 1. $0 \leq p \leq E$.\\
In this case the second execution of Procedure  {\tt Beeping exploration} $(n)$ in the smaller block falls entirely within the waiting period of the larger block.

Case 2. $E<p \leq 2L_1E$.\\
In this case the first execution of Procedure  {\tt Beeping exploration} $(n)$ in the smaller block falls entirely within the waiting period of the larger block.

Case 3. $p>2L_1E$.\\
In this case the second execution of Procedure  {\tt Beeping exploration} $(n)$ in the smaller block falls entirely after the end of the larger block.

This proves the claim.

Consider agents with labels $L_1>L_2$. Call the agent with label $L_1$ the larger agent and the agent with label $L_2$ the smaller agent.
The claim implies that there must exist a round $r$ in which {\tt condition} is satisfied (i.e., has value true) for one of the agents. Indeed,  if such a round has not occurred previously, it must occur
during the execution of Procedure  {\tt Beeping exploration} $(n)$ by one of the agents, call it $A_1$, that falls entirely either in the waiting period of the other agent, call it $A_2$, or after the 
end of the second execution of Procedure  {\tt Beeping exploration} $(n)$ by agent $A_2$. In both situations, agent $A_2$ has its variable $waiting$ set to true, it is idle and listens during an entire execution of Procedure  {\tt Beeping exploration} $(n)$ by agent $A_1$. During this execution, agent $A_1$ visits all nodes of the graph and beeps at each node. Hence agent $A_2$ hears a beep while its variable $waiting$ is set to true, which means that {\tt condition} is true for agent $A_2$. 

Let $r_0$ be the first round in which {\tt condition} is true for some agent. We will show that for this agent the first possibility, i.e., ($waiting$ {\bf and} you hear a beep)
must be satisfied in round $r_0$. Indeed, suppose that the second possibility, i.e., ($\neg waiting$ {\bf and} you hear beeps in two consecutive rounds) is satisfied in round $r_0$. These beeps in two consecutive rounds that the agent heard must have been produced in rounds $r_0-1$ and $r_0$ by the other agent. It could not be during the execution of its block because then the agent beeps only during the execution of Procedure  {\tt Beeping exploration} $(n)$ and this never happens in two consecutive rounds. 
Hence it must have happened after the block of the other agent has been interrupted. This is possible only when {\tt condition} has been true for the other agent, which
must have occurred before round $r_0$, contradicting the definition of this round.

Consider round $r_0$ and the agent, call it $A_3$, for which the clause ($waiting$ {\bf and} you hear a beep) is satisfied in round $r_0$. Agent $A_3$ is at some node $v$ in round $r_0$.
In round $r_0$, the other agent, call it $A_4$, must
still execute its  block, and more precisely it must execute Procedure  {\tt Beeping exploration} $(n)$. It is also at node $v$ in round $r_0$. 
Hence it remains idle at $v$ and listens in the two rounds after it has beeped,
i.e., in rounds $r_0+1$ and $r_0+2$. In these rounds agent $A_3$ stays idle at $v$ and beeps. Then it declares rendezvous in round $r_0+3$ (still remaining at node $v$) and stops.
As for agent $A_4$, after hearing beeps in rounds $r_0+1$ and $r_0+2$, the clause ($\neg waiting$ {\bf and} you hear beeps in two consecutive rounds) becomes
satisfied for it in round $s=r_0+2$. Hence agent $A_4$ (still remaining at node $v$)  declares rendezvous in round $s+1=r_0+3$ and stops. This concludes the proof of correctness.

It remains to estimate the time between the start of the later starting agent and the declaration of rendezvous. Since $n$ is constant, $R(n)$ is also constant, and hence the duration of the execution of its  block by an agent with label $L$ is $(2L+2)\cdot 3R(n) \in O(L)$. Let $L_i$, where $i=1$, or $i=2$, be the label of the later starting agent.
Hence the time 
between the start of this agent and the declaration of rendezvous is at most $(2L_i+2)\cdot 3R(n)+3 \in O(L_i)=O(\ell ^*)$ because $L_i\leq \ell ^*$. This concludes the proof.
\end{proof}

It is interesting to compare the time sufficient to complete the task of rendezvous with detection, given by Algorithm {\tt RV-with-detection} for unrestricted agents, with the time
given by Algorithm {\tt Bounded-energy-RV-with-detection} for bounded-energy agents. This comparison is meaningful on the class of graphs for which both types of agents can achieve rendezvous with detection, i.e., for graphs of bounded size. Consider the class $C_n$ of graphs of size at most $n$, for some constant $n$, and consider $c$-bounded agents for some integer $c$ large enough to achieve rendezvous with detection on the class $C_n$ using Algorithm {\tt Bounded-energy-RV-with-detection}.
By Theorem \ref{ub1}, unrestricted agents can accomplish rendezvous with detection in time $O(P(n,\log \ell))$, i.e., since $n$ is constant, in time {\em polylogarithmic in the smaller label}. By contrast, by Theorem \ref{ub2}, bounded-energy agents can accomplish rendezvous with detection in time $O(\ell ^*)$, i.e., {\em linear in the larger label}.
It is natural to ask if this  exponential gap in time, due to energy restriction, is unavoidable. The following lower bound shows that the answer to this question is yes. In fact, this lower bound holds even 
for the two-node graph, even with simultaneous start of the agents, and even for rendezvous without detection. 

\begin{theorem}\label{lb}
Let $c$ be a positive constant. In the local beeping model,
the time of rendezvous on the two-node graph of $c$-bounded agents with labels from the set $\{1,\dots , M\}$ is $\Omega (\sqrt[c]{M})$. 
\end{theorem}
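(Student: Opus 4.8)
The plan is to use a counting (pigeonhole) argument that exploits the fact that, on the two-node graph in the \emph{local} beeping model, an agent receives no information whatsoever until the first meeting, so its trajectory is entirely determined by its own label. First I would fix the combinatorial model. Label the two nodes $0$ and $1$; the single edge has port $0$ at both endpoints and every node has degree $1$, so in each round an agent's only decision is whether to stay or to cross the edge. Before the two agents first occupy the same node, neither can hear a beep (in the local model a beep is heard only at the shared node), and neither obtains any distinguishing information from moving, since the observed degree and entry port are always the same. Hence, up to the first meeting, the motion of an agent with label $L$ is a fixed function of $L$ alone, independent of the other agent and (under simultaneous start) synchronized with global round $0$.

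Next I would encode each such trajectory compactly. For label $L$, let $f_L(t) \in \{0,1\}$ be the parity of the number of moves performed by round $t$: it equals $0$ at round $0$ and toggles exactly in the rounds in which the agent crosses the edge. Since the agent is $c$-bounded, $f_L$ has at most $c$ toggle points, so its restriction to rounds $\{1,\dots,T\}$ is specified by the set of $\le c$ rounds at which the agent moves, i.e.\ a subset of $\{1,\dots,T\}$ of size at most $c$. I would then note that, when the adversary starts the two agents simultaneously at the two \emph{different} nodes $0$ and $1$, the agents share a node in round $t$ if and only if $f_{L_1}(t)\neq f_{L_2}(t)$; consequently, as long as $f_{L_1}$ and $f_{L_2}$ coincide, the agents never meet.

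The core step is the pigeonhole count. The number of distinct restrictions of such trajectories to $\{1,\dots,T\}$ is at most the number of subsets of size at most $c$, namely $\sum_{j=0}^{c}\binom{T}{j} \le (c+1)T^{c}$, using $\binom{T}{j}\le T^{j}\le T^{c}$ for $T\ge 1$ and $j\le c$. Choosing $T=\Theta(M^{1/c})$ small enough that $(c+1)T^{c} < M$ (for constant $c$ this is $T=\lfloor (M/(c+1))^{1/c}\rfloor$ up to a $\pm 1$ adjustment), the $M$ labels cannot all induce distinct trajectory prefixes, so there exist two \emph{distinct} labels $L_1,L_2 \in \{1,\dots,M\}$ with $f_{L_1}(t)=f_{L_2}(t)$ for every $t\le T$. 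The adversary starts these two agents simultaneously at the two different nodes; by the observation above they cannot meet before round $T$, so the rendezvous time is at least $T = \Omega(\sqrt[c]{M})$.

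The main obstacle, and the point requiring the most care, is the justification that trajectories are label-determined up to the first meeting: this is precisely where the local model (as opposed to the global one) is essential, since soft beeps would leak information before any meeting and break the argument. The remaining care is purely combinatorial, namely confirming that the count $\sum_{j=0}^{c}\binom{T}{j}$ of $c$-toggle prefixes is $O(T^{c})$ for fixed $c$, so that the resulting threshold $T$ indeed comes out to $\Theta(M^{1/c})$.
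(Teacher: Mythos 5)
Your proposal is correct and follows essentially the same route as the paper's proof: both arguments observe that in the local model on $K_2$ an agent's movement pattern up to the first meeting is determined solely by its label, count the at most $\sum_{j=0}^{c}\binom{T}{j}=O(T^c)$ possible $c$-move patterns on $\{1,\dots,T\}$, and apply pigeonhole to find two distinct labels whose agents, started simultaneously at different nodes, move in lockstep and never meet by round $T$. The only differences are cosmetic (encoding trajectories by move-parity rather than by the move-indicator sequence, and the constant $(c+1)$ versus $2$ in the count).
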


\begin{proof}
Consider any rendezvous algorithm $\cal A$ of $c$-bounded agents on the two-node graph. Suppose that labels of the agents can be drawn from the set $\{1,\dots , M\}$
and that agents start simultaneously. Let $T$ be the worst-case meeting time over all pairs of labels from the set $\{1,\dots , M\}$. For any label $L\in \{1,\dots , M\}$,
let $\Phi _L : \{1,\dots , T\} \longrightarrow \{0,1\}$ denote the binary sequence of length $T$ with the following meaning. If agent with label $L$ executes alone algorithm $\cal A$
in the two-node graph (we call it the {\em solo execution}), then it moves in round $i$, if $\Phi _L(i)=1$, and it stays idle in round $i$, if $\Phi _L(i)=0$. For any label $L$, the function $\Phi _L $ is well-defined
because in the two-node graph the history of the agent in any round is exactly the binary sequence describing the previous moves (i.e., the agent cannot ``learn'' anything from the environment during the navigation in this graph, as opposed to more complicated graphs in which it could learn degrees of visited nodes or port numbers by which it enters them). Hence a solo execution of algorithm $\cal A$ by an agent in the two node graph depends only on the label of the agent.
Notice that, if there are two agents in this graph, executing algorithm $\cal A$, then the behavior of each of them before the meeting is the same as in the solo execution.

 Since agents are
$c$-bounded, the number of values 1 in each function $\Phi _L $ is at most $c$.
The number of such functions is 
$${{T}\choose {0}} + {{T}\choose {1}}+\cdots + {{T}\choose {c}}\leq 1+T+T^2 + \cdots + T^c\leq 2T^{c}.$$
If $2T^c <M$, there would exist two labels $L_1,L_2 \in \{1,\dots , M\}$, for which $\Phi _{L_1}=\Phi _{L_2}$. Agents with these labels could not meet by round $T$, as they would
move exactly in the same rounds until round $T$, hence in every round they would be at different nodes. It follows that $2T^c \geq M$, hence $T \in \Omega (\sqrt[c]{M})$.
\end{proof}

Theorem \ref{lb} implies that in the local beeping model, any rendezvous algorithm for bounded-energy agents must have time at least $\Omega (\sqrt[c]{\ell ^*})$, where $\ell ^*$ is the larger label and $c$ is some constant.
Theorems  \ref{ub1},  \ref{ub2} and \ref{lb} imply the following corollary.

\begin{corollary}
Rendezvous with detection of bounded-energy agents is feasible in the class of bounded-size graphs in the local beeping model, but its time must be exponentially larger than the best time
of rendezvous with detection of unrestricted agents in this class of graphs.
\end{corollary}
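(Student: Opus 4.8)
The plan is to assemble the Corollary directly from the three preceding results, specializing everything to a fixed size bound $n$. Feasibility is immediate: fix any $n$, and let $c$ be the constant produced by Theorem \ref{ub2}. Then Algorithm {\tt Bounded-energy-RV-with-detection} is executable by $c$-bounded agents and accomplishes rendezvous with detection on every graph of size at most $n$, even in the local beeping model. This settles the first half of the statement, so the remaining work is entirely about the time gap.

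For the gap I would compare both times as functions of a single parameter, the range $M$ from which labels are drawn, on the class $C_n$ of graphs of size at most $n$ with $n \ge 2$ fixed. On the one hand, Theorem \ref{ub1} gives unrestricted agents a meeting time $O(P(n,\log \ell))$; since $n$ is now a constant, $P(n,\cdot)$ is a fixed polynomial, say of degree $k$, so the unrestricted time is $t_u = O((\log M)^k)$ whenever both labels lie in $\{1,\dots,M\}$ (both labels, and in particular the smaller one, are then at most $M$). On the other hand, the two-node graph belongs to $C_n$, so Theorem \ref{lb} applies and forces any $c$-bounded algorithm --- even one that only achieves rendezvous without detection, hence a fortiori any algorithm achieving detection --- to use time $t_b = \Omega(\sqrt[c]{M})$ in the worst case over labels in $\{1,\dots,M\}$.

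The last step is to make ``exponentially larger'' precise by eliminating $M$. From $t_u = O((\log M)^k)$ one reads off that within a time budget $t_u$ unrestricted agents cover a label range as large as $M = 2^{\Omega(t_u^{1/k})}$, i.e. exponential in a fixed power of the time, whereas the bound $t_b = \Omega(M^{1/c})$ shows that within a time budget $t_b$ bounded agents cover a range of only $M = O(t_b^{c})$, i.e. polynomial in the time. Substituting the former into the latter yields $t_b = \Omega\!\left(2^{\Omega(t_u^{1/k})}\right)$, so the bounded-energy time is exponential in a power of the unrestricted time; equivalently, for a fixed label range $M$ the two quantities are polylogarithmic versus polynomial in $M$, an exponential separation.

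I expect no deep obstacle here, since all the mathematical content is carried by Theorems \ref{ub1}, \ref{ub2} and \ref{lb}; the one point requiring care is the bookkeeping of parameters --- making sure the comparison is run on a single class $C_n$ on which both agent types are feasible, that the two theorems' bounds (stated for the smaller and the larger label respectively) are both rephrased in terms of the common range $M$, and that the lower bound of Theorem \ref{lb}, proved for rendezvous without detection, is invoked correctly as a lower bound for the harder detection task.
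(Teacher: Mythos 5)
Your proposal is correct and follows essentially the same route as the paper, which derives the corollary directly from Theorems \ref{ub1}, \ref{ub2} and \ref{lb}: feasibility from Theorem \ref{ub2}, the polylogarithmic-in-the-label upper bound for unrestricted agents from Theorem \ref{ub1}, and the $\Omega(\sqrt[c]{M})$ lower bound on the two-node graph (valid a fortiori for the harder task with detection) from Theorem \ref{lb}. Your explicit bookkeeping in terms of the common label range $M$ is a careful rendering of exactly the comparison the paper sketches in the paragraph preceding the corollary.
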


\subsection{Bounded-energy agents in the global beeping model}

Our final result shows that in the stronger of our two models, i.e., the global beeping model, the lower bound on time proved in Theorem  \ref{lb} does not
hold anymore. In fact, we show that in this model, bounded-energy agents can meet with detection in the class of bounded-size graphs in time logarithmic in the smaller label. We will also prove that this time is optimal even in the two-node graph. 

The high-level idea of the algorithm is to first break symmetry between the agents in time logarithmic in the smaller label, without making any moves, using the possibility of hearing the beeps of the other agent, wherever it is in the graph. This is the purpose of Procedure  {\tt Symmetry-breaking} in which agents beep or listen according to bits of their transformed label (the transformation will be explained below). Moreover, at the end of this procedure, both agents declare the same round to be {\em red} which permits  them to synchronize in the next part of the algorithm.

 After breaking symmetry, one of the agents remains idle and the other agent finds it using a bounded number of moves. This is done during Procedure
  {\tt Modified-beeping-exploration}, when the moving agent performs exploration of the graph while beeping in every second round. This procedure is started by the moving agent in round {\em red}, simultaneously declared by both agents in Procedure  {\tt Symmetry-breaking}.
Correct declaration of rendezvous is possible due to the distinction between  hearing {\em loud} and {\em soft} beeps, and due to synchronization of agents which permits each of them to hear the beeps of the other agent.

We now proceed with the detailed description of the algorithm.
We first define the following transformations of the label $L$ of an agent. Let $(c_1\dots c_k)$ be the binary representation of the label $L$.
Let $T_1(L)$ be the binary sequence $(01c_1c_1c_2c_2\dots c_kc_k01)$, and let $T_2(L)$ be the result of replacing each bit 0 of $T_1(L)$ by the string $(00)$ and
each bit 1 by the string $(10)$. Note that the length of the binary string $T_2(L)$ is $2(2k+4)\in O(\log L)$.

The following procedure, executed by an agent with label $L$ and called upon the activation of the agent, does not involve any moves and permits to break symmetry between any two agents
with different labels.

\begin{center}
\fbox{
\begin{minipage}{13cm}

{\bf Procedure} {\tt Symmetry-breaking}\\

Let $T_2(L)=(d_1\dots d_s)$\\
$i:=1$\\
{\bf repeat} in consecutive rounds {\bf until} you hear a beep\\
\hspace*{1cm}{\bf if} ($i \leq s$ {\bf and} $d_i=1$) {\bf then} beep {\bf else} listen\\
\hspace*{1cm}$i:=i+1$\\
Let $r$ be the round when you first hear a beep  (counted since your wake-up)\\
{\bf if} you beeped in round $r-1$ {\bf then}\\
\hspace*{1cm}declare round $r+1$ {\em red}\\
\hspace*{1cm}$role:=waiting$\\
 {\bf else}\\ 
\hspace*{1cm}beep in round $r+1$\\
\hspace*{1cm}declare round $r+2$  {\em red}\\
\hspace*{1cm}$role:=walking$;\\
{\bf if} the beep you heard was {\em loud} {\bf then}\\ 
\hspace*{1cm}declare rendezvous in the {\em red} round and stop;

\end{minipage}
}
\end{center}

\begin{lemma}\label{breaking}
Upon completion of Procedure {\tt Symmetry-breaking}, both agents declare the same round to be {\em red}.
For one of the agents round {\em red} is the next round after it heard a beep for the first time, and this agent sets $role:=waiting$. 
For the other agent round {\em red} is two rounds
after it heard a beep for the first time, and this agent sets $role:=walking$.  The round declared {\em red} is $O(\log \ell)$ rounds after the activation of the later agent, where $\ell$ is the smaller label.
\end{lemma}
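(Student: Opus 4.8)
The plan is to track the two agents on a common global clock: let the earlier agent $A$ be activated in round $0$ and the later agent $B$ in round $\delta\ge 0$. The crucial structural fact I would isolate first is a consequence of the $T_2$ transformation: since $T_2$ appends a $0$ after every bit of $T_1(L)$, an agent can beep only in \emph{odd} rounds of its own local clock. Hence, on the global clock, $A$ beeps only in even rounds and $B$ only in rounds of the fixed parity of $\delta$; in particular, no agent ever beeps in two consecutive rounds, and each agent's first scheduled action is to listen (the leading block of $T_1$ is $01$, so the first bit of $T_2$ is $0$). I would also record that, while an agent is still inside its loop, it beeps in global round $t$ exactly when the corresponding symbol of $T_1(L)$, placed at its activation round and padded with zeros, equals $1$.

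Given this, the statements about the roles and the common \emph{red} round follow from a short handshake argument that I would carry out independently of timing. Let $t^\ast$ be the first global round in which some agent hears a beep; then exactly one agent $Y$ beeps and the other, $X$, listens and hears it. I would first show that $X$ did not beep in round $t^\ast-1$, so that $X$ takes the \emph{walking} branch: otherwise, since $Y$ is present at $t^\ast$, cannot beep in two consecutive rounds, and cannot have been absent at $t^\ast-1$ (its first action is to listen, not beep), $Y$ would have been listening at $t^\ast-1$ and would have heard $X$ there, contradicting minimality of $t^\ast$. Being \emph{walking}, $X$ beeps again in round $t^\ast+1$; because $Y$ beeped at $t^\ast$ it listens at $t^\ast+1$ by the parity fact and is still in its loop, so $Y$ hears this beep, finds that it beeped in the preceding round, and takes the \emph{waiting} branch. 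Then $X$ declares round $t^\ast+2$ red and $Y$ declares $(t^\ast+1)+1=t^\ast+2$ red, so both declare the same round red, one as \emph{walking} (two rounds after hearing) and one as \emph{waiting} (one round after hearing), exactly as claimed.

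It remains to prove that $t^\ast$ exists and that $t^\ast-\delta=O(\log\ell)$, and here I would split on the parity of $\delta$. If $\delta$ is odd, the beeping rounds of $A$ and $B$ have opposite parities, so the two agents never beep simultaneously; the leading $01$ of $T_1$ then forces $B$ to beep in its third local round while $A$ listens (or forces $A$ to beep one round earlier while $B$ listens), whence $t^\ast\le\delta+2$. If $\delta$ is even, write $h=\delta/2$; now both agents beep only in even global rounds, and the first heard beep occurs at the first index $j$ with $T_1(L_A)[\,h+j\,]\neq T_1(L_B)[\,j\,]$ (zero-padded), that is, at the first position where exactly one agent is scheduled to beep. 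Thus the whole problem reduces to bounding the first mismatch of one $T_1$-string against a shift of the other.

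The main obstacle is this last combinatorial estimate, and I would handle it using the rigidity of $T_1$: each string decomposes into length-two blocks $01,\ \overline{c_1},\dots,\overline{c_k},\ 01$, where every interior block $\overline{c_i}=c_ic_i$ is $00$ or $11$, so $01$ occurs as a block only at the two markers. For even $h$ the blocks align, and the opening block $01$ of one string can match the other only at its first or last block, forcing either $h=0$ (the unshifted case, where the doubled label bits and the trailing marker produce a mismatch by position $O(\log\ell)$) or a mismatch within the first two positions. For odd $h$ the doublings are out of phase, and matching over a range forces the overlapped label bits of both strings to be constant, hence all equal to the leading $1$; such an all-ones run is broken, and a mismatch produced, as soon as the shorter string's support is exhausted, since its trailing $01$ marker (or the first $0$-bit of either label) then meets a differing symbol. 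In every case a mismatch appears within $O(\min(|T_1(L_A)|,|T_1(L_B)|))=O(\log\ell)$ positions, so $t^\ast-\delta=O(\log\ell)$ and the red round $t^\ast+2$ is $O(\log\ell)$ after $B$'s activation. Finally, I would note that the loud/soft distinction of the global model is irrelevant to these claims: it governs only the optional rendezvous declaration at the end of the procedure, while the red round and the role assignment are determined exactly as above.
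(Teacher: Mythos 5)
Your proposal is correct in substance. The handshake part (minimality of the first hearing round $t^\ast$, the impossibility of the hearing agent having beeped in round $t^\ast-1$, and the resulting agreement of both agents on round $t^\ast+2$ with the stated role assignment) coincides with the paper's argument almost word for word. Where you genuinely diverge is in proving that $t^\ast$ exists and lies within $O(\log\ell)$ rounds of the later activation. The paper does this by a direct case analysis on the activation delay $\delta$: simultaneous start with equal or unequal label lengths, $\delta\in\{1,2,3\}$, the range $4\le\delta\le s-8$ split according to $\delta$ odd, $\delta\equiv 0\pmod 4$ and $\delta\equiv 2\pmod 4$ (the last with four sub-situations depending on where the trailing $01$ marker of one string lands relative to the other), plus boundary values of $\delta$; in each case it exhibits an explicit round in which one agent beeps and the other listens. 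You instead observe that the padding in $T_2$ confines each agent's beeps to a single parity class, dispose of odd $\delta$ in two lines, and reduce even $\delta$ to one clean combinatorial statement: the first index where $T_1(L_B)$ disagrees with the $h$-shift of $T_1(L_A)$ is $O(\log\ell)$. Your block decomposition ($01$ markers versus doubled interior bits) and the constant-run argument for odd $h$ then do in two cases what the paper does in roughly eight, and arguably more robustly: the paper's sub-situation in which the trailing marker of the later agent's string falls after the earlier agent has finished its string only exhibits a hearing round $O(\log L_2)$ after activation, whereas your argument bounds the first mismatch by the length of the \emph{shorter} string, which is exactly what the $O(\log\ell)$ claim requires.

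Two details in the combinatorial core still need to be written out. First, in the even-$h$ case you allow $B$'s opening $01$ block to align with $A$'s \emph{closing} $01$ block (i.e.\ $h=2(k_A+1)$); this is neither ``$h=0$'' nor ``a mismatch within the first two positions,'' and you need one further step ($B$'s second block is $11$, meeting $A$'s exhausted, all-zero tail) to obtain a mismatch at position $3$. Second, in the odd-$h$ case, the assertion that the exhausted string's trailing marker ``meets a differing symbol'' requires the parity observation that an odd shift cannot align the final $0$ of one string with the final $0$ of the other (such an alignment would force the shift to be even); without this remark one has not excluded the degenerate scenario in which both strings terminate in lockstep and no mismatch ever occurs. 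Both are one-line fixes, so I regard the proposal as a correct, and somewhat tidier, alternative to the paper's proof.
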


\begin{proof}
We first prove that there exists a round $\xi$ in which both agents are present in the graph and one of the agents hears a beep.
Let $L_1$ be the label of agent $A_1$ and let $L_2$ be the label of agent $A_2$.
Consider three cases. In all of them we show a round $\xi$ as above, assuming that none of the agents heard a beep at any earlier round.

{\bf Case 1.} Both agents are activated in the same round and have labels with binary representations of  equal length.

Let $i$ be the first index for which strings $T_2(L_1)$ and  $T_2(L_2)$ differ. Since agents are activated in the same round, the same round corresponds
to index $i$ for both of them. In this round one of the agents beeps and the other agent listens, hence it hears a beep.

{\bf Case 2.} Both agents are activated in the same round and have labels with binary representations of  different length.

Without loss of generality, let $L_1$ be the label of smaller length. Let $T_2(L_1)=(d_1\dots d_s)$. Let $\alpha$ be the common activation round of the agents
and let $\gamma=\alpha +s -1$. Agent $A_1$ beeps in round $\gamma -1$ and listens in rounds $\gamma -3$, $\gamma -2$ and $\gamma$.
Agent $A_2$ either listens in all rounds $\gamma -3$, $\gamma -2$, $\gamma -1$ and $\gamma$, or beeps in rounds
$\gamma -3$ and $\gamma -1$ and listens in rounds $\gamma -2$ and $\gamma$. In both cases one of the agents hears a beep either in round
$\gamma -3$ or $\gamma -1$. 

{\bf Case 3.} Agents are activated in different rounds.

Without loss of generality let $A_1$ be activated earlier and let $\alpha$ be its activation round. Let $\beta>\alpha$ be the activation round of $A_2$.
Let $s$ be the length of $T_2(L_1)$.

If $\beta=\alpha+1$ or $\beta=\alpha+2$  then $A_2$ hears a beep in round $\alpha +2$. 
If $\beta=\alpha+3$ then $A_2$ hears a beep in round $\alpha +4$ because the first bit of the binary representation of every label $L$ is 1, hence the 5th
bit of $T_2(L_1)$ is 1. 

Next suppose that  $\alpha +4 \leq \beta \leq \alpha +s-8$. Consider three possibilities. 
If $\beta -\alpha$ is odd then agent $A_2$ beeps in round $\beta +2$ and agent $A_1$ listens in this round, hence it hears a beep.
If $\beta -\alpha$ is divisible by 4 then agent $A_2$ listens in round $\beta$ and beeps in round $\beta +2$.
Agent $A_1$  either listens in rounds $\beta$ and $\beta +2$ or beeps in these rounds. Hence one of the agents hears a beep in one of these rounds.
The case of  $\beta -\alpha=4i+2$, for some integer $i$, is slightly more complicated. In this case divide all rounds, starting from round $\alpha$, into segments of size 2.
Round $\beta$ is in the beginning of such a segment. Segments correspond to bits of sequences $T_1(L_1)$ and $T_1(L_2)$.
Let $T_1(L_1)$ be the binary sequence $(01c_1c_1c_2c_2\dots c_kc_k01)$ and consider the final bits 01 of $T_1(L_2)$. 
Let $I$ be the segment corresponding to this bit 0 and let $J$ be the segment corresponding to this bit 1.
There are four possible
situations:

1. The segment $I$ corresponds to some $c_j$ in $T_1(L_1)$ and
the segment $J$ corresponds to $c_{j+1}$.

In this case the second copy of $c_{j+1}$ corresponds to a segment in which agent $A_2$ has already terminated and listens.
If $c_{j+1}=0$ then $J$ corresponds to 0 in $T_1(L_1)$ and to 1 in $T_1(L_2)$. Hence $A_1$ hears a beep in the first round of this segment.
If $c_{j+1}=1$ then in the first round of the segment following $J$ agent $A_1$ beeps and agent $A_2$ listens,
hence it hears a beep.

2. The segment $I$ corresponds to $c_k$ in $T_1(L_1)$.

In this case $I$ corresponds to the second copy of $c_k$ in $T_1(L_1)$, and hence $J$ corresponds to 0 in $T_1(L_1)$ and to 1 in $T_1(L_2)$.
Consequently, agent $A_1$ hear a beep in the first round of this segment.

3. The segment $I$  corresponds to the final 1 in $T_1(L_1)$.

In this case $A_2$ hears a beep in the first round of segment $I$. 

4. In the segment $I$ agent $A_1$ has already terminated and listens.

In this case, in segment $J$ corresponding to 1 in $T_1(L_2)$ agent $A_1$ also listens after termination. Hence, it hears a beep in the first round of this segment. 
 
It follows that, whenever $\alpha +4 \leq \beta \leq \alpha +s-8$, one of the agents hears a beep at the latest in the round following the last round corresponding to a 
bit from $T_2(L_2)$ (or equivalently, at the latest in the round following the last round corresponding to a 
bit from $T_1(L_2)$).

If $\beta=\alpha +s-7$ then agent $A_2$ listens in round $\beta +1$ and beeps in round $\beta +2$, while agent
$A_1$ either listens in both these rounds or beeps in round $\beta +1$ and listens in round $\beta +2$. 
Hence one of the agents hears a beep in one of these rounds.
If $\beta=\alpha +s-6$ then agent $A_2$ beeps in round $\beta +2$ and agent $A_1$ listens in this round, hence it hears a beep. 
If $\beta=\alpha +s-5$ then agent $A_2$ beeps in round $\beta +2$ and agent $A_1$ listens in this round, hence it hears a beep. 
Finally, if $\beta>\alpha +s-5$, then agent $A_2$ beeps in round $\beta +4$ (because the first bit of the binary representation of every label $L$ is 1, hence the 5th
bit of $T_2(L_2)$ is 1)  
and agent $A_1$ listens in this round (as it already concluded processing bits
of $T_2(L_1)$), hence it hears a beep. This concludes the argument in Case 3.

Hence one of the agents must always hear a beep in some round $\xi$. Moreover, our case analysis shows that round $\xi$ is at most $O(\log \ell)$
rounds after the activation of the later agent, where $\ell$ is the smaller label.

Let $\rho$ be the first round in which one of the agents hears a beep. Call this agent $A_3$ and the other agent $A_4$. Notice that $A_3$ could not beep in round $\rho -1$.
Otherwise, agent $A_4$ that was already active in round $\rho -1$ and listened in this round (agents never beep in two consecutive rounds) would hear a beep
in round $\rho -1$, contradicting the definition of round $\rho$. Hence, according to the procedure, agent $A_3$ beeps in round $\rho +1$ and declares round
$\rho +2$ {\em red}. Agent $A_4$ which listens in round $\rho +1$ hears a beep in this round for the first time. Since it beeped in round $(\rho +1)-1$, it declares round
$(\rho +1)+1$ {\em red}. Hence both agents declare the same round $\rho +2$ {\em red}. 

For agent $A_4$, round {\em red} is the next round after it heard a beep for the first time, and this agent sets $role:=waiting$. 
For agent $A_3$ round {\em red} is two rounds
after it heard a beep for the first time, and this agent sets $role:=walking$. 

Since round $\rho$  is at most $O(\log \ell)$
rounds after the activation of the later agent, the same is true for the round declared  {\em red}, which concludes the proof of the lemma.
\end{proof}

We will also use a modified version of Procedure {\tt Beeping-exploration}, described at the beginning of this section, for an integer parameter~$n$.

{\bf Procedure} {\tt Modified-beeping-exploration} $(n)$

Let $EXP(n)$ be the procedure described in Section 2 that permits exploration of all graphs of size at most $n$. Replace each round $r$ of $EXP(n)$ by two consecutive rounds as follows. If in round $r$ of $EXP(n)$ the agent takes port $p$ to move to node $w$, then in the first of the two replacing rounds the agent
takes port $p$ to move to $w$ and beeps, and in the second replacing round it stays at $w$ and listens. 

Hence, in each of the two rounds replacing a round $r$ of $EXP(n)$, the agent is at the same node in Procedure  {\tt Modified-beeping-exploration} $(n)$ as it is in Procedure $EXP(n)$  in round $r$.

Below we give the pseudo-code of the algorithm executed by an agent with label $L$ in a graph of size at most $n$.

\begin{center}
\fbox{
\begin{minipage}{16cm}

{\bf Algorithm} {\tt Fast-bounded-energy-RV-with-detection}\\

Perform Procedure {\tt Symmetry-breaking}\\
{\bf if} $role=waiting$ {\bf then}\\
\hspace*{1cm}stay idle and listen {\bf until} you hear a {\em loud} beep\\
\hspace*{1cm}let $t$ be the round when you first hear a {\em loud} beep  (counted since your wake-up)\\
\hspace*{1cm}beep in round $t+1$, declare rendezvous in round $t+2$, and stop\\
{\bf else}\\
\hspace*{1cm}perform Procedure {\tt Modified-beeping-exploration} $(n)$ starting in round {\em red}\\ 
\hspace*{1cm}(previously declared in the execution of Procedure {\tt Symmetry-breaking})\\
\hspace*{1cm}{\bf until} you hear a {\em loud} beep\\
\hspace*{1cm}let $t$ be the round when you first hear a {\em loud} beep  (counted since your wake-up)\\
\hspace*{1cm}declare rendezvous in round $t+1$, and stop.

\end{minipage}
}
\end{center}

\begin{theorem}\label{ub3}
For any positive integer constant  $n$ there exists a positive integer  $c$, such that Algorithm {\tt Fast-bounded-energy-RV-with-detection} can be executed by $c$-bounded agents in any graph of size at most $n$, in the global beeping model. If two such agents with distinct labels execute this algorithm in such a graph, then they meet and simultaneously declare rendezvous in time $O(\log \ell)$
after the start of the later agent, where $\ell$ is the smaller label.  This time is optimal, even in the two-node graph.
\end{theorem}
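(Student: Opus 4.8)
The plan is to prove the three assertions of the theorem in turn: the existence of a move‑bound $c$, the correctness together with the $O(\log\ell)$ running time, and finally the matching lower bound establishing optimality.

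For the bounded‑energy claim I would first note that Procedure {\tt Symmetry-breaking} prescribes no moves whatsoever, so an agent moves only while executing Procedure {\tt Modified-beeping-exploration}$(n)$, and only the agent whose $role$ becomes $walking$ ever enters that procedure. Since that procedure replaces each of the $R(n)$ steps of $EXP(n)$ by two rounds, with a move in at most the first of each pair, the walking agent moves at most $R(n)$ times. As $n$ is a fixed constant, setting $c:=R(n)$ (a quantity independent of the label $L$) yields the first assertion.

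For correctness and timing I would invoke Lemma~\ref{breaking}: both agents agree on a common round {\em red}, exactly one obtains $role=walking$ and the other $role=waiting$, and {\em red} occurs $O(\log\ell)$ rounds after the later activation. The argument then splits on whether the first beep heard in {\tt Symmetry-breaking} was {\em loud} or {\em soft}. If it was loud, the agents share a node (they never move during symmetry breaking), and I would check that both reach the clause ``the beep you heard was {\em loud}'' and hence both declare rendezvous in the common round {\em red} — a simultaneous meeting at one node. If it was soft, the agents sit at distinct nodes; the waiting agent stays idle and listens while the walking agent runs {\tt Modified-beeping-exploration}$(n)$ from round {\em red}. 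Here I would use that $EXP(n)$ visits every node of a graph of size at most $n$, so the walking agent eventually moves onto the waiting agent's node and beeps, and the waiting agent then hears its first {\em loud} beep. The delicate point is the handshake bookkeeping: I must verify that in the very round in which the waiting agent replies by beeping, the walking agent — listening in the second of its two replacement rounds — hears a {\em loud} beep, so that the two ``declare rendezvous'' lines fire in the same round. Since the exploration phase lasts at most $2R(n)=O(1)$ rounds while {\em red} is reached in $O(\log\ell)$ rounds, the total time is $O(\log\ell)$.

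For optimality I would prove a matching $\Omega(\log\ell)$ lower bound on the two‑node graph by an indistinguishability/counting argument that does not even use the energy restriction. On two nodes every node has degree one with a single port, so an agent that never hears a beep gains no information and its sequence of per‑round actions — each one of the constantly many choices of stay/move combined with beep/listen — is a deterministic function of its label alone; call this its \emph{solo behavior}. The crux is the claim that if two distinct labels $L_1,L_2\in\{1,\dots,M\}$ had identical solo behaviors throughout the first $T$ rounds, then placing the two agents at the two \emph{different} nodes with simultaneous start produces a perfectly symmetric execution: round by round both perform the same action, neither hears the other (one cannot hear while beeping, and when both listen nobody beeps, so even soft beeps never occur), and hence either both stay or both move and merely swap nodes. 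The agents are then never simultaneously at the same node and cannot meet by round $T$, contradicting correctness. Thus the map from a label to its length‑$T$ solo behavior is injective, so $M\le a^{T}$ for the constant $a$ bounding the number of per‑round actions, giving $T\ge\log_a M=\Omega(\log M)$; restricting labels to $\{\lceil M/2\rceil,\dots,M\}$ forces the smaller label $\ell$ of the hard pair to satisfy $\ell\ge M/2$, whence $T=\Omega(\log M)=\Omega(\log\ell)$. I expect the main obstacle to lie precisely in setting up this scenario: one must use the \emph{different}‑nodes configuration rather than the same‑node one (where symmetric behavior would permit a legitimate simultaneous rendezvous), and must argue carefully that the soft‑beep capability of the global model cannot break the symmetry, so that identical solo behaviors provably preclude meeting.
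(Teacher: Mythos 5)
Your proposal is correct and follows essentially the same route as the paper: the move bound $c=R(n)$ from the fact that only the walking agent moves during {\tt Modified-beeping-exploration}$(n)$, the loud/soft case split with the $\sigma,\sigma+1,\sigma+2$ handshake at the waiting agent's node, and a pigeonhole/indistinguishability count of per-round behaviors on $K_2$ for the $\Omega(\log\ell)$ lower bound. The only cosmetic differences are that the paper counts $4^t$ behaviors against the $2^{x-1}$ labels of a fixed binary length $x$ (so the colliding pair automatically has large smaller label), whereas you restrict the label range to $\{\lceil M/2\rceil,\dots,M\}$ to the same effect.
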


\begin{proof}
Let $n$ be a positive integer constant. In order to show the existence of the integer $c$, it is enough to show that the number of steps prescribed by the algorithm depends only on
the constant $n$ (and not on the label $L$ of the agent). The agent moves  only during the execution 
of Procedure  {\tt Modified-beeping-exploration} $(n)$, once every two rounds. Since a full execution of 
Procedure  {\tt Modified-beeping-exploration} $(n)$ takes $2R(n)$ rounds (where $R(n)$ is the execution time of $EXP(n)$)  it is enough to take $c=R(n)$.

We now prove the correctness of the algorithm. By Lemma \ref{breaking}, both agents declare the same round to be {\em red}. In Procedure {\tt Symmetry-breaking},
either both agents heard a {\em loud} beep, or they both heard a {\em soft}
beep, because agents do not move during the execution of this procedure.
If the beep they heard was {\em loud} then they are at the same node in round {\em red} and they correctly  declare rendezvous in this round and stop.
Hence we may assume that the beep they heard was {\em soft}. By Lemma \ref{breaking}, in round {\em red} one of the agents, call it $A_1$, has the variable $role$
previously set to $waiting$, and the other agent, call it $A_2$, has the variable $role$ previously set to $walking$. Agent $A_1$ stays idle forever
at its starting node $v$. Let $\sigma$ be the first round in which agent $A_2$ arrives at node $v$. Such a round exists because $EXP(n)$ explores the entire graph.
Agent $A_2$ beeps in round $\sigma$. Agent $A_1$ hears a {\em loud} beep in this round for the first time. Hence it beeps in round $\sigma +1$ and declares rendezvous in round $\sigma +2$. Agent $A_2$ hears a loud beep for the first time in round $\sigma +1$. Hence it declares rendezvous in round $\sigma +2$. This concludes the proof of correctness.

It remains to estimate the time between the start of the later starting agent and the declaration of rendezvous.
By Lemma \ref{breaking}, the round declared {\em red} in Procedure {\tt Symmetry-breaking}  
is $O(\log \ell)$ rounds after the activation of the later agent, where $\ell$ is the smaller label.
 Since $n$ is constant, $R(n)$ is also constant, and hence the duration of the execution of Procedure  {\tt Modified-beeping-exploration} $(n)$
 is constant. It follows that the time between the activation of the later agent and the declaration of rendezvous is
 $O(\log \ell)$, where $\ell$ is the smaller label. 
 
 In order to prove that time $\Theta(\log \ell)$ is optimal, even in the global beeping model, consider the two-node graph $K_2$.
  For any integer $x>2$ and any algorithm $\cal A$ for rendezvous with detection in the global beeping model,
 working in time $t<(x-1)/2$, we show two labels
 $L_1$ and $L_2$ with binary representations of length $x$, such that this algorithm fails if agents, placed at both nodes of  $K_2$ and activated simultaneously, have these labels. In every round, the behavior of an agent can be described as a pair of bits, the first of which determines if the agent stays or moves in the given round, and the second of which determines if the agent beeps or does not beep in the given round. Hence, in an execution of an algorithm working in time $t$, there are
 $4^t$ possible behaviors of agents. In any execution in which an agent does not hear any beep, the behavior of the agent depends only on its label. There are $2^{x-1}$ labels with binary representations of length $x$
 (each representation starts with 1). Consequently, if $t<(x-1)/2$, then $2t<x-1$, and hence $4^t<2^{x-1}$, which implies that there are at least two labels 
 with binary representations of length $x$, which induce identical behavior of the agents. None of such agents hears any beep,  as they beep in exactly the same rounds, 
and they are at different nodes at all times. Hence the algorithm fails for them. This implies that rendezvous with detection must take time $\Omega(\log \ell)$, where $\ell$ is the smaller label. 
\end{proof}

\section{Conclusion}

We presented three algorithms of rendezvous with detection. The first two of them work
even in the local beeping model: one for unrestricted agents in arbitrary graphs,
and the other for bounded-energy agents in bounded-size graphs. We showed that in the latter case the meeting time of bounded-energy
agents must be  exponentially larger than the best time
of rendezvous with detection of unrestricted agents. More precisely, in order to meet
in bounded-size graphs,  bounded-energy agents must use time polynomial in the larger label, while
unrestricted agents can meet in time polylogarithmic in the smaller label.  
The third algorithm works for bounded-energy agents only in the global beeping model, but it is much faster: it enables such agents to perform rendezvous with detection in bounded-size graphs in time logarithmic in the smaller label, which is optimal.

Rendezvous with detection may be considered as a preprocessing procedure for other important tasks in graphs. One of them is the task of constructing
a map of an unknown graph by an agent. It is well known that this task cannot be accomplished by a single agent operating in a graph, if it cannot mark nodes 
(e.g., a single agent cannot learn the size of an oriented ring).
For the same reason it cannot be accomplished by two non-communicating agents, as they would not be aware of the presence of each other, and thus each of them would act as a single agent. By contrast, our algorithms of rendezvous with detection in the beeping model can serve, with a simple addition,
to achieve map construction by the agents: the algorithm working for arbitrary agents can be used to accomplish this task in arbitrary graphs, 
and the algorithms working for bounded-energy agents can be used to accomplish this task in bounded-size graphs. This addition can be described as follows.
Note that, in all our algorithms, at the time when agents declare rendezvous, symmetry
between them is broken: in the case of algorithms in the local model, one of the agents heard two beeps at the meeting node, and the other agent heard only one beep, and in the case of the
algorithm in the global model, one of the agents has $role$ set to $waiting$ and the other to $walking$. Hence agents can start simultaneously the following procedure in the round after rendezvous declaration. The first agent stays idle
and acts as a stationary token, beeping in every second round, while the second agent silently executes exploration with a stationary token (at the end of which it acquires the map of the graph), cf. e.g., \cite{CDK}, replacing each exploration round by two rounds, in the first of which it moves as prescribed and in the second it stays idle. Beeps of the idle agent allow the circulating silent agent to recognize the token at each visit and complete exploration and
map construction. At the end of the exploration, the second agent is with the first one and can inform it of the end of the exploration by beeping in the last round, in which the first agent is silent (listens).  
Then the roles of the agents may change to allow the previously idle agent to acquire the map in its turn. (Note that an agent
cannot efficiently communicate the already acquired map due to the restrictive communication model.)

\pagebreak

\bibliographystyle{plain}

\begin{thebibliography}{12}
\bibitem{AABCHK}
Y. Afek, N. Alon, Z. Bar-Joseph, A. Cornejo, B. Haeupler, F. Kuhn, Beeping a maximal independent set.
Proc. 25th International Symposium on Distributed Computing (DISC 2011), LNCS 6950, 32-50.


\bibitem{alpern02b}
S. Alpern and S. Gal,
The theory of search games and rendezvous.
Int. Series in Operations research and Management Science,
Kluwer Academic Publisher, 2002.




\bibitem{anderson98b}
E. Anderson and S. Fekete,
Two-dimensional rendezvous search,
Operations Research 49 (2001), 107-118.

\bibitem{BIOKM}
D. Baba, T. Izumi, F. Ooshita, H. Kakugawa, T. Masuzawa, Space-optimal rendezvous of mobile agents in asynchronous 
trees. Proc. 17th Int. Colloquium on Structural Information and Comm. Complexity, (SIROCCO 2010), LNCS 
6058, 86-100. 


\bibitem{BCGIL}
E. Bampas, J. Czyzowicz, L. Gasieniec, D. Ilcinkas, A. Labourel, Almost optimal asynchronous rendezvous in infinite multidimensional grids,
Proc. 24th International Symposium on Distributed Computing (DISC 2010), LNCS 6343, 297-311.


\bibitem{CDK}
J. Chalopin, S. Das, A. Kosowski, 
Constructing a map of an anonymous graph: Applications of universal sequences,
Proc. 14th International Conference on Principles of Distributed Systems (OPODIS 2010), 119-134.




\bibitem{CFPS}
M. Cieliebak, P. Flocchini, G. Prencipe, N. Santoro, 
Distributed computing by mobile robots: Gathering, SIAM J. Comput. 41 (2012), 829-879.


\bibitem{CK}
A. Cornejo, F. Kuhn, Deploying wireless networks with beeps,
Proc. 24th International Symposium on Distributed Computing (DISC 2010), LNCS 6343, 148-162.


\bibitem{CKP}
J. Czyzowicz, A. Kosowski, A. Pelc, How to meet when you forget: Log-space rendezvous in arbitrary graphs, Distributed Computing 25 (2012), 165-178. 






\bibitem{DFKP}
A. Dessmark, P. Fraigniaud, D. Kowalski, A. Pelc.
Deterministic rendezvous in graphs.
Algorithmica 46 (2006), 69-96.

\bibitem{DPV}
Y. Dieudonn\'{e}, A. Pelc, V. Villain, How to meet asynchronously at polynomial cost, SIAM J. Comput. 44 (2015), 844-867. 


\bibitem{fpsw}
P. Flocchini, G. Prencipe, N. Santoro, P. Widmayer,
Gathering of asynchronous robots with limited visibility, Theoretical Computer Science 337 (2005), 147-168.

\bibitem{FSVY}
P. Flocchini, N. Santoro, G. Viglietta, M. Yamashita, Rendezvous of two robots with constant memory,
Proc. 20th Int. Colloquium on Structural Information and Comm. Complexity (SIROCCO 2013), LNCS 8179, 189-200.





\bibitem{FP2}
P. Fraigniaud, A. Pelc, Delays induce an exponential memory gap for rendezvous in trees, ACM Transactions on Algorithms 9 (2013), article 17. 

\bibitem{GN}
S. Gilbert, C. Newport, The computational power of beeps, Proc. 29th International Symposium on Distributed Computing (DISC 2015), 31-46.



\bibitem{Ko}
M. Kouck\'{y}, Universal traversal sequences with 
backtracking, Journal of Computer and System Sciences  65 (2002), 717-726. 





\bibitem{KKPM08}
E. Kranakis, D. Krizanc, and P. Morin, 
Randomized Rendez-Vous with Limited Memory,
Proc. 8th Latin American Theoretical Informatics (LATIN 2008),  LNCS 4957, 605-616.

\bibitem{KKSS}
E. Kranakis, D. Krizanc, N. Santoro and C. Sawchuk, 
Mobile agent rendezvous in a ring, 
Proc. 23rd Int. Conf. on Distr. Computing Systems
(ICDCS 2003), 592-599.

\bibitem{MP}
A. Miller and A. Pelc, Time versus cost tradeoffs for deterministic rendezvous in networks,
 Proc. 33rd Annual ACM Symposium on Principles of Distributed Computing (PODC 2014), 282-290.
 
 \bibitem{P}
 A. Pelc, Deterministic rendezvous in networks: A comprehensive survey, Networks 59 (2012), 331-347. 

\bibitem{Re}
O. Reingold, Undirected connectivity in log-space, Journal of the ACM 55 (2008).



\bibitem{TSZ07}
A. Ta-Shma and U. Zwick.
Deterministic rendezvous, treasure hunts and strongly universal exploration sequences.
Proc. 18th ACM-SIAM Symposium on Discrete Algorithms (SODA 2007), 599-608.

\bibitem{YJYLC}
J. Yu, L. Jia, D. Yu,  G. Li, X. Cheng, Minimum connected dominating set construction in wireless networks
               under the beeping model, Proc. IEEE Conference on Computer Communications, (INFOCOM 2015), 972-980.


\end{thebibliography}


\end{document}